\newtheorem{lemma}{Lemma}[section]
\newtheorem{thm}{\bf Theorem}[section]
\title{A New Hybrid Consensus Protocol: Deterministic Proof Of Work}
\author{\uppercase{Zhuan Cheng}$^a$,
\uppercase{Gang Wu$^a$, Hao Wu$^a$, Muxing Zhao$^b$,\\ Liang Zhao$^a$, and Qingfeng Cai$^a$}}
\begin{document}
\maketitle

\medskip

\begin{abstract}
The Decentralized-Consistent-Scale (DCS) Triangle defines three dimensions that illustrate the tradeoffs of the blockchain consensus mechanism. 
In this paper, we propose a new hybrid consensus protocol, called Deterministic Proof of Work (DPoW), which can reach high levels of scalability and consistency without significant reduction to decentralization. Our protocol introduces a Map-reduce PoW mining algorithm to perform alongside Practical Byzantine Fault Tolerance (PBFT) verification, which together allow for transactions to be confirmed immediately, largely improving scalability. In addition, the protocol is designed such that forking cannot occur, ensuring strong consistency and security against a multitude of attacks. The Map-reduce PoW mining process ensures that no single entity can control the network, guaranteeing decentralization. We analyzed the security of our protocol by evaluating the possibility of double spending attacks, and furthermore, conducted experiments which demonstrate our claims.
\end{abstract}

\baselineskip=15pt


\section{Introduction}

For years, blockchain systems have struggled to optimize the Decentralized-Consistent-Scale (DCS) Triangle \cite{DCS} of the consensus mechanism. Specifically, out of decentralization, consistency, and scalability, only two can be achieved while sacrificing the third . Proof of Work (PoW) is used to achieve high levels of decentralization and consistency \cite{POW} while suffering severe scalability issues. Delegated Proof of Stake (DPoS) \cite{DPOS} and Practical Byzantine Fault Tolerance (PBFT) \cite{PBFT2}, on the other hand, are used to achieve high levels of consistency and scalability at the cost of the reduction of decentralization.

The concept of the PoW protocol was first detailed by Cynthia Dwork and Moni Naor \cite{HC1} in 1993, and in 1999, Markus Jakobsson \cite{POW} described the utilization of this protocol to produce a piece of data, termed the ``Proof of Work," that is controllably difficult to create but simple to verify. Producing a proof of work can be a random process with low probability of success such that extensive trial and error is required before a valid proof of work is generated. Proof of work is used in Bitcoin for block generation through the process of "mining", and the security of the network depends upon it \cite{BTC2}.

Currently, Bitcoin \cite{BTC1} operates at a rate of roughly $3$ to $7$ transactions per second, while Ethereum operates at approximately $7$ to $15$ per second. These rates are sub-optimal for the size of their growing userbases. This slow transaction rate bottleneck occurs due to three factors. Firstly, transactions require an extended period of time before they can be confirmed. In a PoW based blockchain, the confirmation of a transaction can be quantized by the number of blocks that follow the block it is contained in. Additionally, attacks such as the Finney attack, race attack, and 51\% attack mean transactions with little or no confirmations are vulnerable to reversal with little cost to the attacker. Due to this, the system is designed such that the more confirmations a transaction gets, the harder it becomes to reverse, while also increasing confirmation time, especially for transactions of large amounts. Block generation time must also be controlled to be long enough such that the process cannot occur too quickly and risk loss of consistency, thus further lowering transaction rates. Decreasing the block generation time will increase the chance of forks, thereby increasing the risk of alternative history attacks, which may result in double spending. Finally, the block spread rate is linearly related to the size of the block, thus incentivizing miners to generate small blocks in order to gain the advantage in the proof of work competition. 

The PoW protocol is also criticized for causing excessive waste of energy, which is primarily a result of the needless repeated calculations performed by every node. Specifically, each node must do the exact same calculation process as every other node to solve for the PoW of each block \cite{SM1}. In our protocol, we avoid this unnecessary repetition by dividing the work among nodes, thus reducing overall calculation times and power consumption.

This paper is principally concerned with the improvement of the blockchain consensus mechanism, which is vital for improving the overall performance of a blockchain. Here we propose a new protocol, called ``Deterministic Proof of Work (DPoW), which is characterized by separating the consensus process into a puzzle solving step and a verification step. This two step process combines the advantages of both PoW and PBFT, and works as follows:

Firstly, during each block generation round, miners perform the puzzle solving step to generate new blocks. We propose a Map-reduce PoW algorithm that optimizes energy consumption through a process called ``sharding,'' which essentially divides PoW calculations between nodes such that nodes do not need to individually process the entirety of the searching space. By separating the searching space into multiple ``shards'' and distributing the work to miners, each node only finds a proof of work for their small portion of the searching space, ensuring that nodes have very little chance of unnecessary repeated work. As a result, the time to calculate the hash of a block is decreased, which considerably reduces the total computations performed by the entire network. Therefore, through this method, we can increase the scalability of a blockchain while maintaining decentralization and consistency. Our experiments show that integrating Map-reduce PoW directly results in a significant improvement in the PoW solving time per block.

In the following verification step, newly generated blocks are sent to an elected verifier group during each block generating time. The verifier group then performs the PBFT consensus protocol on the blocks, which will in turn decide upon a single valid block to be appended to the chain. Lamport states that in distributed fault-tolerance systems, the maximum tolerance threshold of faulty members is $1/3$ \cite{PBFT4} of the total members. Experimentally, we confirmed that our protocol does indeed support this conclusion.

Due to the deterministic finality of our Deterministic Proof of Work protocol, we can ensure that upon a block's validation, it immediately becomes irreversible in the blockchain history. This means that there is no limit to the speed of block production, unlike the original PoW protocol. This is possible because our system is designed such that forks cannot develop, and thus there only ever exists a single chain in the entire network. By coupling both consensus mechanisms, a blockchain can have increased scalability as consensus no longer depends on a race of computing power and transactions can be validated much faster. Transaction validation times can be reduced to mere seconds compared to the hours required for bitcoin to accomplish the same task. This also retains a high level of consistency through the implementation of a verifier group via PBFT. Finally, it can perform these steps without significant reduction to decentralization as the block generation process remains uncontrolled by any central authority.

In summary, we make the following contributions:
\begin{enumerate}
\item  We construct a new consensus algorithm which is a hybrid of the PoW and PBFT consensus mechanisms.
\item  We propose a map-reduce mining protocol.
\item  We modify the PBFT algorithm to fit in our consensus protocol.
\item  We prove the security of our hybrid protocol against double spending attacks.
\item  We provide experimental data which measures the efficiency and security of our consensus algorithm.
\end{enumerate}

\section{Bitcoin Blockchain Consensus Protocol: Proof of Work}

The underlying database structure for transactions in Bitcoin and many other digital cryptocurrencies is a decentralized ledger. This ledger is maintained by anonymous parties, called ``miners,'' who execute a consensus protocol that maintains a data structure called the blockchain. A blockchain is a linked list of blocks, with each block containing the hash of the previous block in the chain. Transactions are bundled into each of these blocks. All full nodes in the network have their own copy of the blockchain in order to maintain the security of the system. The Bitcoin protocol requires that a valid ``proof of work''  must be provided for each new generated block for it to be appended to the chain. A proof of work is the solution to a cryptographic puzzle which contains the previous block hash, the Merkle root \cite{MR1} of the valid transactions in the current block, and a special transaction called a coinbase which rewards the miner for solving the cryptographic puzzle. A valid proof of work satisfies the following formula:
\begin{equation}
\begin{aligned}
{\rm{H}}\left( {\rm{PreHash}},{\rm{Diff}},{\rm{Time}},{\rm{TxRoot}},{\rm{Nonce}} \right) \le Target.
\end{aligned}
\end{equation}

In this expression, $\rm{H}$ is a cryptographic hash function (SHA-256 in bitcoin, SHA-3 in our paper), ${\rm{Target}}$ is the max possible hash value which makes the proof of work valid, ${\rm{PreHash}}$ is the hash of the previous block's header, ${\rm{Diff}}$ is the difficulty value which scales the ${\rm{Target}}$ within the hash value range, $\rm{Time}$ is the UNIX timestamp, ${\rm{TxRoot}}$ is the Merkle root \cite{MR1} of all transactions contained in the block, and $\rm{Nonce}$ is an integer that can be used to brute-force the hash inequality. The miner mines the block by incrementing the nonce until a valid proof of work has been discovered. In Fig. \ref{puzzle}, we show how the proof of work is organized. Once the block is mined successfully, the block cannot be changed without redoing the work due to the second pre-image resistance of the cryptographic hash function \cite{KELSEY}.

\begin{figure}[htbp]
  \centering
  \includegraphics[width=5in]{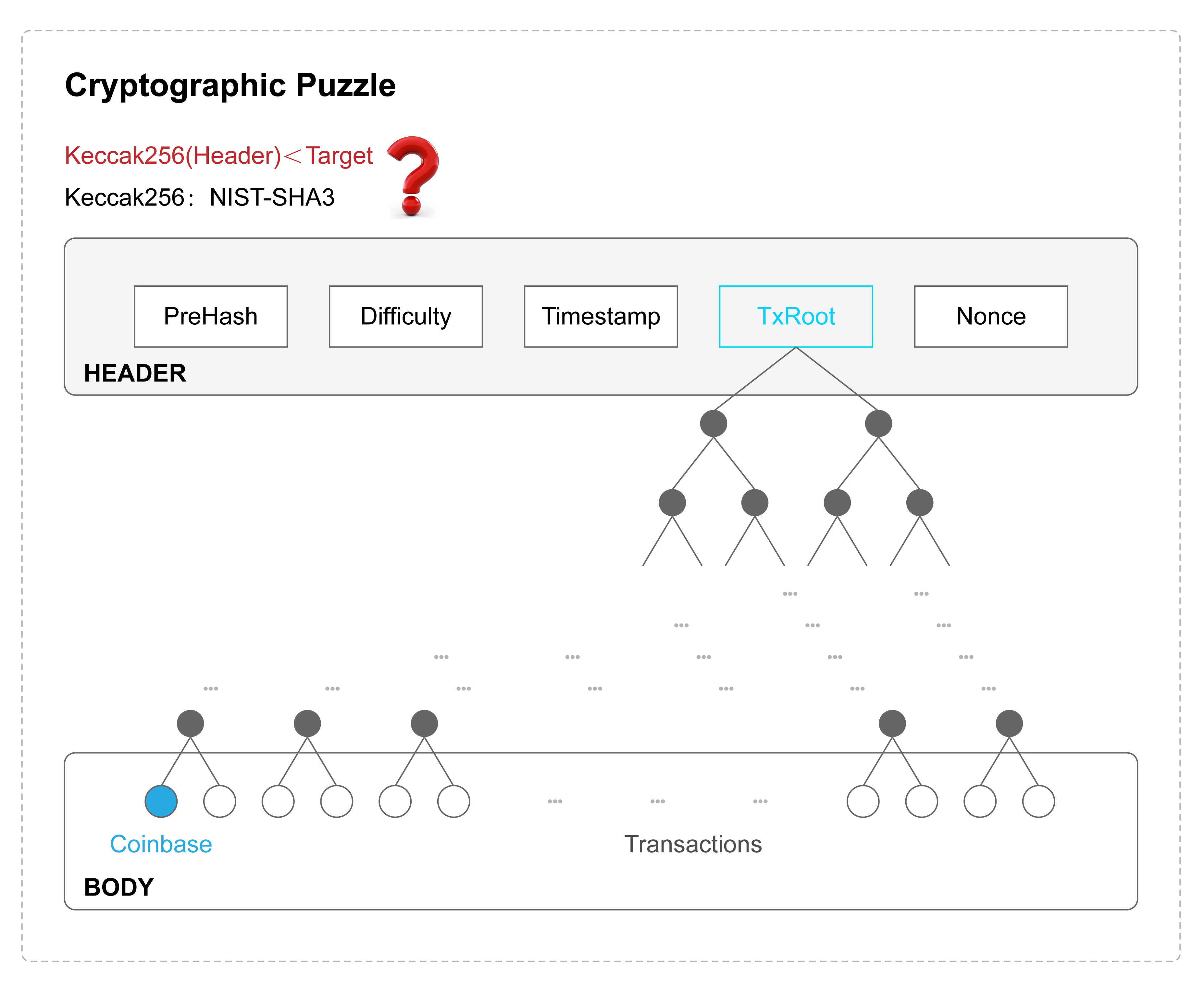}
  \caption{The cryptographic puzzle of Proof of Work}\label{puzzle}
\end{figure}

Bitcoin uses the Hashcash proof of work system \cite{HC1}. The original Bitcoin proof of work algorithm is based on the SHA256 hash function which uses the Merkle-Damgard construction, however, this function has the length-extension weakness \cite{SHA256}. We suggest using the SHA3-Keccak256 hash function instead \cite{SHA3}, as this hash function uses the sponge and duplex construction, which is not sensitive to length-extension attacks, thus making protocols that utilize it more robust.

We outline the procedure used by Bitcoin to generate a block with valid proof of work in detail below:
\begin{enumerate}
\item  Each miner collects the transactions into a memory pool separately. When mining begins, miners pick transactions from the pool and use them to generate a Merkle tree\cite{MT}.
\item  The miner generates a temporary block which has everything determined except the nonce. This temporary block consists of a header and body. The header contains the previous block header hash, difficulty, transaction Merkle root (obtained in Step 1), and timestamp, while the body contains the transaction list.
\item  The miner will continuously put different nonces into the temporary block and compute the hash of the block header until either he finds a valid proof of work or the miner receives a new block from the network, meaning another miner has already validated the current block. If the miner finds a valid proof of work, he will broadcast the new block. If the entire searching space of the nonce has been tried, the miner will return to Step 1, modify the template by changing the transaction or timestamp, and continue trying.
\end{enumerate}

\begin{figure}[htbp]
\centering
\includegraphics[width=5in]{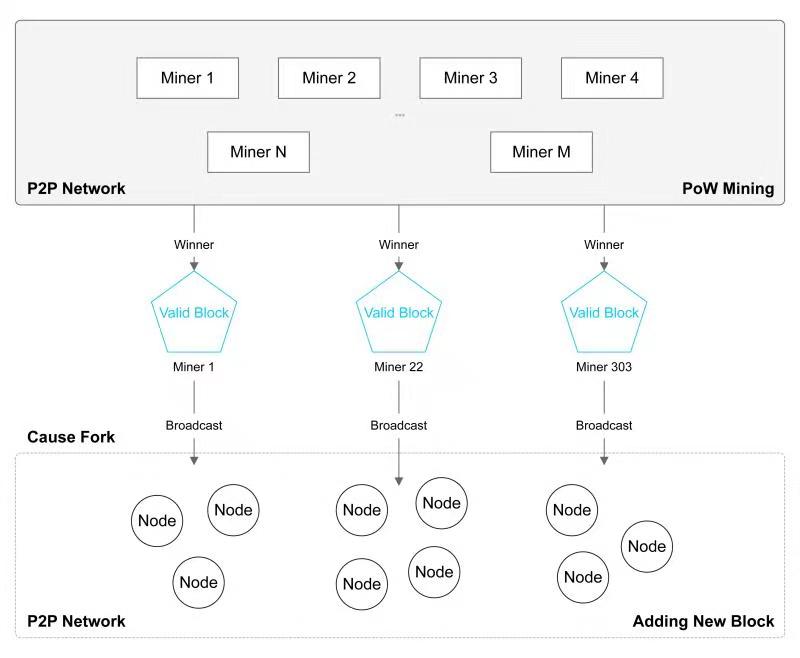}
\caption{An overview of the original Proof of Work protocol. Miners work individually and forks can occur if multiple miners solve the cryptographic puzzle. }\label{pow}
\end{figure}

  Fig \ref{pow} depicts a traditional Proof of Work procedure. We can see that each miner is working separately from other miners. If they are working in the same transaction list, there is a high probability of repeating work, resulting in a waste of hash computing power. Moreover, multiple miners create blocks with the same preceding block, resulting in several valid blocks at the same height in the chain. PoW allows the possibility of forks, where several different blockchains have the same length, yet no chain supersedes other chains. Miners can choose blocks from any branch as their preceding block. The Bitcoin protocol defines a regulation on which chain miners should mine in order to resolve forks. The criterion is that only the longest chain will be committed by the P2P network, as the longest chain signifies it consumed the most mining power to generate. However, the Bitcoin consensus protocol is not incentive-compatible; Eyal and Sirer \cite{ES1} describe an attack, called selfish mining, in which colluding miners can obtain extra revenue. This attack increases the size of the colluding group until it becomes the majority. They propose one modification for the protocol to mitigate this attack: when receiving competing branches of the same length, the miner should broadcast all of the branches and pick one to mine randomly. Decker and Wattenhofer \cite{DW1} estimate that accidental bifurcation occurs on average about once every 56 blocks. The faster blocks are produced, the higher the chance that forks will form, which negatively impacts the consistency of the chain. 
  
  In section 3 we propose a map-reduce mining mechanism to address the problem of repeated work in PoW. Any miner can submit a valid block by broadcasting it to the peer to peer (P2P) network, then other miners verify its validity, and if it is valid, they will commit it. In section 4, we introduced a PBFT verification step after map-reduce mining, which aims to solve the consistency problem of the original PoW by disallowing forking through providing deterministic finality, instead of the probabilistic finality used by PoW. An overview of the DPoW protocol is illustrated in Fig \ref{dpow}.
  
\begin{figure}[htbp]
  \centering
  \includegraphics[width=5in]{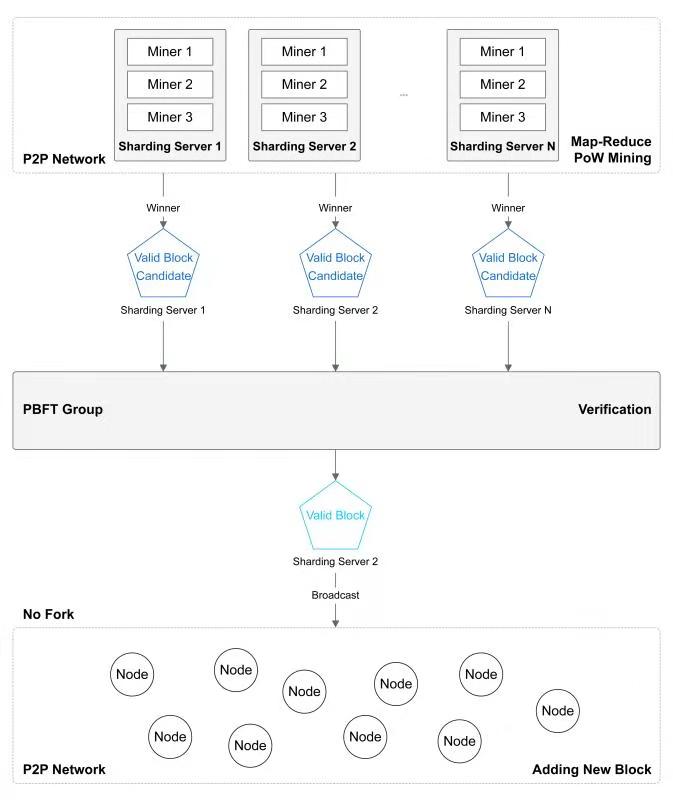}
  \caption{An overview of the Deterministic Proof of Work protocol. Miners are grouped by sharding servers. Sharding servers solve the cryptographic puzzle and broadcast the block to a verifier group. Only one block can be verified and broadcasted to the P2P network at a time.}\label{dpow}
\end{figure}

\section{Map-Reduce Proof of Work}

The original proof of work mining mechanism requires all miner to brute-force the cryptographic puzzle individually. This high amount of work repetition results in a major loss in efficiency. In order to improve the efficiency of this process, we design a map-reduce proof of work procedure. Let us first introduce a new type of node called a ``sharding server.'' In the original proof of work mining system, miners have two major tasks. First, they must construct a temporary block as the input for mining. Second, they attempt to fill the nonce in the temporary block to check if it is valid. In our procedure, we let the sharding server build the temporary block from the memory pool and miners will only need to find the nonce for the specified temporary block. The necessary steps are described in detail below:

\begin{enumerate}
\item  The sharding server will map the entire searching space into several shards using a non-overlapping method, then distribute the individual shards to the miner nodes working for it.

\item  The miner nodes will work in their searching shards and attempt to obtain the proof of work. The miner should also frequently listen to the sharding server within a reasonable interval of time for new updated shards. If the miner obtains the proof of work, he will submit it to the sharding server.

\item  The sharding server will then submit the proof of work and wait for verification.
\end{enumerate}

In Algorithm \ref{MRPOW}, we present the pseudocode of the Map-reduce PoW algorithm.

\begin{algorithm}
  \caption{Map Reduce PoW}
  \label{MRPOW}
  \begin{algorithmic}[1]
   \State master.start()
   \State worker.start()
   \State master.register(worker) \Comment{notify master about the worker}
   \State master.dispatch(work)  \Comment{distribute shards to worker}
   \State notFound = true    \Comment{miner starts mining}
   \While{notFound}
     \State worker.change($coinbase$) \Comment{worker changes coinbase data to make a trial}
     \State $pass \gets worker$.seal()    \Comment{worker computes hash and returns true on success}
     \If{$pass = true$}
        \State $notFound \gets false$
     \EndIf
   \EndWhile
 \State worker.submit($work$) \Comment{worker submits proof of work to master}
 \State $pass \gets$ master.check($work$) \Comment{master verifies proof of work}
 \If{$pass = true$}
  \State  master.broadcast($block$) \Comment{master broadcasts the valid block}
 \EndIf
  \end{algorithmic}
\end{algorithm}
The challenge lies in how to design the mapping algorithm to avoid overlaps. The nonce is normally 4 bytes (or 32 bits), which means the searching space contains $2^{32}$ possible nonces. Most miners can cover this space in seconds, thus we cannot simply split the entire range of nonces. We designed a method to map the searching space into separate non-overlapping spaces through the coinbase transaction. The sharding server will listen to the P2P network and package the transaction, however, it will not generate the entire transactions' Merkle tree. Instead, it will create the Merkle branch of the coinbase transaction. Then the sharding server will distribute the previous block header hash, difficulty, timestamp, Merkle branch of the coinbase transaction, and the receiver address of the coinbase transaction to miners. Nonces are not distributed. The sharding server will then wait for responses from the miner servers. It will not wait endlessly, and will instead map new shards and distribute them in the condition that no miner submits a proof of work in a certain period, or another sharding server obtains a verified proof of work first.

The miner gets a distributed shard from the sharding server and then works on this shard, then the miner will produce the coinbase transaction by concatenating the extra data and the address of the coinbase transaction's receiver, provided by the sharding server. This extra data will use the miner's unique ID (for example, their bitcoin address) as a prefix, followed by any extra byte array. After the miner produces the coinbase transaction, he can use it to generate the transaction root hash using the Merkle branch given by the sharding server. The miner will then attempt to adjust the nonce to get a valid proof of work. If the miner is unsuccessful in their attempt, he can adjust the extra data and generate a new coinbase transaction until he is successful. Once the coinbase is generated, the entire searching space is modified, and the miner can then try the nonce in this new searching space. Since each miner has a different unique ID, and since the extra data begins with their ID, the extra data will not overlap with other miners, thus we can expect that the Merkle root search by different miners will have a very minimal probability of collision.

The collision probability can be analyzed through the birthday model. Suppose there are $m$ people and $N$ days per year. We can use $P(m,N)$ to denote the probability that there are at least two people with same birthday, as described in Equation \ref{EQ1}.
\begin{equation}
\label{EQ1}
\begin{aligned}
P(m,N)&  = 1 - \frac{{N!}}{{(N-m)! \cdot {N^m}}}= 1 - \prod\limits_{i = 1}^{m - 1} ({1 - \frac{i}{N}}) \\
& \ge 1 - \prod\limits_{i = 1}^{m - 1} {{e^{ - \frac{i}{N}}}} {\rm{ = 1}} - {{\rm{e}}^{ - \frac{{m(m - 1)}}{{2N}}}}.
\end{aligned}
\end{equation}
In our model, we can set $p(m)$ as the probability that a collision happens given $m$ different coinbase messages. Since the hash function's output length is $256$ bits, there are $2^{256}$ different possible outputs. We can treat the collision event as if there are at least two people (miners) having the same birthday (a collision), where $N$ is $2^{256}$. This probability estimation is represented in Equation \ref{EQ2}.

\begin{equation}
\label{EQ2}
\begin{aligned}
p(m) \ge {\rm{1}} - \exp ( - \frac{1}{2} \cdot \frac{{{m^2}}}{{{2^{256}}}}) \approx \frac{1}{2} \cdot {(\frac{m}{{{2^{128}}}})^2}.
\end{aligned}
\end{equation}
The values of $m$ and their lower bounds of collision probability are displayed in Table \ref{tab1}.

\begin{table}[htbp]
\centering
\caption{The relationship between collision probability and number of trials.}
\label{table}
\begin{tabular}{|c|c|c|c|c|c|c|}
\hline
$m$   & $2^{16}$   &$2^{32}$     &$2^{64}$    &$2^{96}$    &$2^{127}$     &$2^{128}$ \\
\hline
$p(m)$ & $2^{-225}$ &$2^{-193}$  &$2^{-129}$  &$2^{-65}$    &$0.125$      &$0.5$ \\
\hline
\end{tabular}
\label{tab1}
\end{table}

From Table \ref{tab1}, we can see that it will take $2^{128}$ trials before the chance that some pair of miners have a hash collision becomes at least $50\%$, thus the chance of having a collision is negligible. Thus, if each miner uses a unique ID as their coinbase extra data prefix, their Merkle root will have little chance of overlap, and thus neither will their searching space.

The blockchain will only reward the sharding server node through coinbase transaction. In order to incentivize miners, the sharding server must distribute the reward using an appropriate method. This can be done either in chain or off chain, however, the details of this system will not be discussed in this paper as it is not a major concern of the main chain consensus protocol.

Once the sharding server obtains a valid proof of work, it will broadcast the valid block to a randomly selected verifier group and wait for verification. If the block gets verified, all nodes in the P2P network can then replicate their local state machine based on this block. The verification process will be detailed in the next section.
  
\section{Byzantine Fault Tolerance Verification}
In the last section, we proposed the Map-reduce PoW mining algorithm to generate a valid candidate block. After the sharding server generates the candidate block, the server must submit it to a verifier group and await verification. When the candidate block is received by the verifier group, they will run a Practical Byzantine Fault Tolerance (PBFT) protocol to ensure the consistency and correctness of the verification result in asynchronous environments \cite{PBFT2}. The verifier group is elected through a verifiable random function (VRF) \cite{VRF} and the result is recorded in the main chain periodically. The election procedure will not be covered in this paper. The verifier group will verify a certain amount of blocks before it is replaced by a new elected verifier group. The timeframe within which each chosen verifier group performs all of its tasks is called one period, and the timeframe within which a single block verification procedure is conducted is called a round. For every round, each verifier in the group takes turns becoming the chosen primary verifier. This primary verifier initializes the verification protocol and also picks one block among the valid block candidates received from different sharding servers. Then, the primary verifier proposes this block to the rest of the verifier group, who will then execute the PBFT protocol on the block.

The original PBFT protocol \cite{PBFT1} requires some modification in order to be adapted for our system. The verification protocol procedure contains four steps, shown in Figure 4. In a normal case, the protocol begins with the "propose" stage. Through the "prevote" and "precommit" stages, verfiers vote and reach consensus on either a proposed block or an empty block. After this stage, they undergo the "commit" stage where the verified block is broadcast to the rest of the network. The blockchain height is thus increased with the addition of the new block and consensus continues into a new cycle of block selection and verification. If any errors occur, such as in the case that the primary verifier proposes an invalid block, consensus will follow the red path shown in Figure 4, skip the commit stage (thus leaving the blockchain height unchanged), and enter a new round of consensus.

\begin{figure}[htbp]
  \centering
  \includegraphics[width=5in]{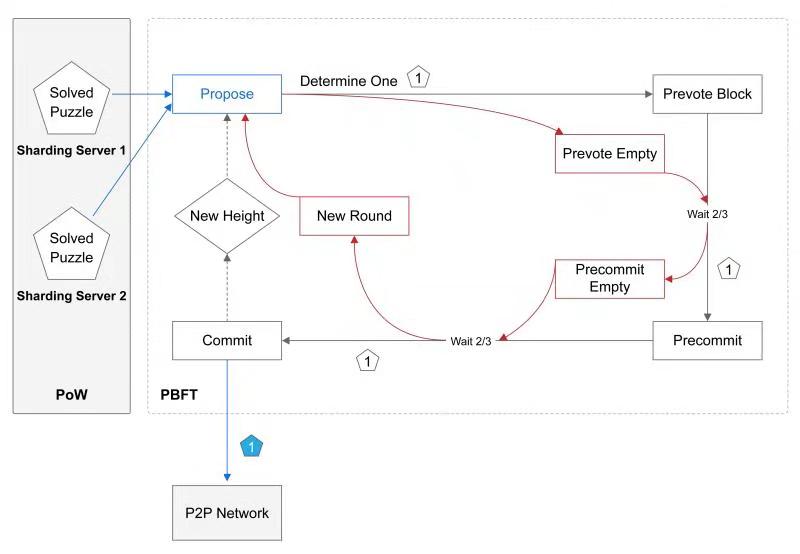}
  \caption{DPoW adapted Practical Byzantine Fault Tolerance.}\label{pbft}
\end{figure}
The top-level procedure that implements PBFT is shown in Algorithm 2. This procedure takes in a context, $ctx$, which captures the current state of the blockchain. The procedure will initialize the height and round $<h,r>$ as $<chain height + 1, 0>$, and will set the lockedBlock to "empty," then begin the consensus loop. The stages of the verification process are detailed as follows.

\begin{enumerate}
\item "Propose." The Propose() function is first called by verifiers. The verifiers first wait for a block proposal by the primary verifier. If the lockedBlock is not empty, the proposer proposes this locked block. Otherwise, the proposer collects a new mined blocks from sharding servers within a reasonable time period. The first valid block is then chosen and a block proposal is generated containing the height, round, proposed block, and signature. The primary verifier then generates the signature, fills the proposal, and broadcasts it to other verifiers. Upon receiving the block proposal, verifiers must first check if the signature is valid by checking if it was indeed produced by the primary verifier. Furthermore they need to verify if the block itself is valid. If is invalid, or no proposal was received upon time out, they will instead default on an empty block proposal.

\item "Prevote." Each verifier obtains a valid or empty proposal from the first stage. A vote for that proposed block hash or empty block hash is then generated and broadcasted to the group. While this occurs, verifiers also collect the votes received from other verifiers. This procedure is responsible for ensuring vote validity by checking height, round, and signature. Finally, a vote count is performed, and the block hash which obtains more than 2/3 of votes is chosen and returned. If there is no consensus on a hash, meaning no hash surpassed the threshold amount of votes, an empty block hash will be returned instead.

\item "Precommit." In this stage, verifiers broadcast their precommit vote to the hash returned by the "Prevote" stage. A vote count is performed once again, and if more than 2/3 of precommit votes are received for a block hash, the hash is returned, otherwise the empty block hash is returned.

\item "Commit." If the block hash returned by the "Precommit" stage is not empty, it must be equal to H(lockedBlock). The lockedBlock is then committed and the height, round, and lockedBlock is updated. If the blockHash instead represents an empty block, then simply increment the round by 1 and continue. 

\end{enumerate}

It is crucial to ensure that consensus is reached on only one valid block for each blockchain height. It is possible to have two different blocks become verified at any given height, in the case that some verifiers support both block A and block B in two different rounds. Take for example if there are four verifiers: A, B, C, and D. A has the votes for Block X from A, B, and C and thus commits Block X. If A's vote is not received by the rest of the group in time due to a network delay, then B, C, and D only receive B and C's votes on Block X and enter a new round. B, C, and D then reach consensus on some block Y instead. As a result, both Block X and Y appear as valid to the network. To avoid this problem, when a user observes more than a 2/3 majority vote on a block in the Prevote stage, the user will lock this block and precommit it. If this block is not commited in this round, such as in the case of verifier B and C, they will begin a new round. If B or C becomes the primary verifier, they will then re-propose Block X in order to ensure the group reaches consensus on the same block as A. If instead verifier D becomes the primary verifier, a new Block Y is proposed, however, B and C will stick to Block X and instead vote on an empty block hash to be safe.

\begin{algorithm}
\caption{Practical Byzantine Fault Tolerant}
\label{BFT}
\begin{algorithmic}[1]
\Procedure{consensus}{$ ctx $}
\State $h\gets ctx.height + 1$   \Comment{start from the next block in the chain}
\State $r\gets 0$
\State $empty \gets nil$
\State $lockedBlock \gets empty$
\While {$r < MaxStep$} \Comment{if r=MaxStep, assume network problem}
  \State $proposeBlock \gets $ Propose($ctx,h,r,lockedBlock$)
  \State $lockedBlock \gets $  Prevote($ctx,h,r,$H($proposeBlock$))
  \State $blockHash \gets $  Precommit($ctx,h,r,lockedBlock$)
  \If {$blockHash = H(lockedBlock)$} 
  \State Commit($ctx,lockedBlock$)   \Comment{consensus reached}
  \State $h++$
  \State $r\gets 0$
  \State $lockedBlock\gets empty$
  \Else	\Comment{no consensus reached in this round}
  \State $r++$ \Comment{start a new round}
  \EndIf
\EndWhile
\EndProcedure
\end{algorithmic}
\end{algorithm}

\section{Security Discussion}

In this section we will analyze the security of our protocol. Suppose there are in total $N$ parties in our network. Among these parties, there are $T$ parties controlled by the attacker. In order to simplify the security model, we assume all parties have the same hash computing power. If one party has twice the hash computing power of other parties, we can simply treat it as two parties. We evaluate the possibility that the attacker can successfully deploy a double spending attack.

\begin{lemma} \label{lemma1}
The PBFT verification step can commit a proposal if more than $2/3$ of all verifiers are in agreement.
\end{lemma}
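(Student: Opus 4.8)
The plan is to trace the four stages of Algorithm~\ref{BFT} and verify that the $2/3$ threshold built into each stage is cleared precisely when a two-thirds supermajority of verifiers holds a consistent vote. I would begin by fixing notation: let $n$ be the total number of verifiers and let $S$ be a set of honest verifiers with $|S| > \tfrac{2}{3}n$ that all agree on a proposed block $B$. Because the remaining verifiers number strictly fewer than $\tfrac{1}{3}n$, no behaviour on their part can prevent $\mathrm{H}(B)$ from clearing a $2/3$ count, and --- since each honest verifier votes at most once per stage --- they cannot push any competing hash above the same threshold.

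I would then argue stage by stage. In the \emph{Prevote} stage every verifier in $S$ broadcasts a prevote for $\mathrm{H}(B)$; these votes alone exceed $\tfrac{2}{3}n$, so each honest verifier's tally returns $\mathrm{H}(B)$, and by the locking rule of Section~4 each locks $B$. In the \emph{Precommit} stage the same set $S$ broadcasts precommit votes for $\mathrm{H}(B)$, again clearing the threshold, so \texttt{Precommit()} returns a non-empty $\mathrm{blockHash}$ equal to $\mathrm{H}(\mathrm{lockedBlock})$. The guard $\mathrm{blockHash} = \mathrm{H}(\mathrm{lockedBlock})$ in the \texttt{Commit} branch then holds, $\mathrm{blockHash}$ is non-empty, and the block is committed, which is exactly the assertion of the lemma.

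The step I expect to be the main obstacle is justifying that the messages of the agreeing verifiers are actually delivered and tallied inside a single round, given the asynchronous environment in which the protocol runs. I would handle this with the standard PBFT liveness assumption of eventual delivery combined with the round-advancing and locking mechanism already present in Algorithm~\ref{BFT}: should a round time out before the votes are counted, the verifiers in $S$ stay locked on $B$, so when one of them becomes the primary in a later round it re-proposes $B$ and the stage-by-stage argument applies anew. A secondary fact I would record is quorum intersection --- two subsets each larger than $\tfrac{2}{3}n$ must overlap in more than $\tfrac{1}{3}n$ verifiers, so no two distinct hashes can each reach the $2/3$ bar --- which guarantees that the committed block is the unique one on which the supermajority agreed.
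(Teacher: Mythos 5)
Your proposal is correct, but it takes a genuinely different route from the paper: the paper gives no argument at all for Lemma~\ref{lemma1} and simply defers to the Castro--Liskov PBFT correctness results (\cite{PBFT1}). You instead prove the claim directly from the protocol as specified in Section~4 and Algorithm~\ref{BFT}, tracing the Propose/Prevote/Precommit/Commit stages, invoking the locking-and-reproposal rule to cope with rounds in which the supermajority's messages are not tallied in time, and recording quorum intersection to rule out a competing hash also clearing the $2/3$ bar. This is arguably the more appropriate argument here, since the paper's verification step is a \emph{modified} PBFT (Tendermint-style rounds with locked blocks), so a bare citation to the original PBFT paper does not literally cover it; your stage-by-stage check closes that gap and makes the lemma self-contained. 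What the paper's citation buys in exchange is brevity and reliance on a fully worked-out correctness proof, including the view-change and asynchrony machinery that your sketch only handles via the standard eventual-delivery assumption and the implicit premise that some member of the agreeing set becomes primary within the $MaxStep$ round budget --- two small points you would need to state as explicit hypotheses (or note that ``can commit'' only asserts possibility, not guaranteed termination) to make your argument fully rigorous.
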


\begin{proof}
See \cite{PBFT1} for details.
\end{proof}

\begin{thm}\label{thm1}
Assuming an honest majority, the Deterministic Proof of Work scheme is secure against double spending attacks.
\end{thm}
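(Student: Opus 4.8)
The plan is to reduce any successful double-spending attack to a violation of PBFT safety inside the verifier group, and then to bound the probability of such a violation using the randomness of the verifier election. First I would pin down the adversarial goal: the attacker submits a transaction $tx_1$ paying a victim, waits until $tx_1$ is carried in a block $B$ that the verifier group commits at height $h$ (at which point the victim releases goods), and then tries to make a conflicting transaction $tx_2$, which spends the same outputs, part of the canonical chain. I would then isolate the two structural guarantees the protocol supplies: honest verifiers reject any block whose body conflicts with the committed history, and, by the locking argument of Section 4 together with Lemma \ref{lemma1}, the PBFT layer commits at most one block per height and never reverts a committed block, giving deterministic finality and a single fork-free chain.

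Next I would run the reduction. A crucial observation is that, unlike in ordinary PoW, finality here does not rest on a longest-chain rule, so the attacker's share of hash power is irrelevant to reversal: controlling many sharding servers or miners only lets the attacker propose more candidate blocks, all of which are still screened by the verifiers. Consequently the only way to undo $B$ or to seat $tx_2$ is to force the verifier group either to commit two distinct blocks at height $h$ or to reorganize a finalized block, both of which are exactly the safety violations that Lemma \ref{lemma1} forbids so long as fewer than $1/3$ of the group are Byzantine. A blatantly invalid block carrying $tx_2$ is even harder to push, since it needs a malicious super-majority to clear the $2/3$ prevote and precommit thresholds. Hence a double spend implies that the elected verifier group contains at least the critical $1/3$ fraction of attacker-controlled members.

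I would then convert the residual claim into a tail estimate. The verifier group is a size-$k$ committee drawn by the VRF from the $N$ parties, of which $T$ are adversarial; modelling this as sampling without replacement, I would apply a Chernoff-type bound for the hypergeometric distribution to show that the probability the committee reaches the $1/3$ Byzantine level decays exponentially in $k$ whenever the global fraction $T/N$ is bounded strictly below $1/3$. Combined with the reduction, this yields a negligible attacker success probability, which is the asserted security.

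The hard part will be reconciling the stated hypothesis with the threshold PBFT actually requires. Lemma \ref{lemma1} and Lamport's bound need fewer than one third of the verifiers to be faulty, whereas the theorem assumes only an honest \emph{majority}, i.e.\ $T < N/2$. A random committee drawn from a population that may be almost half adversarial will contain at least one third adversarial members with non-negligible probability, so honest majority alone does not make the bad event negligible. The cleanest repair is to strengthen the hypothesis to $T/N < 1/3 - \varepsilon$ with a sufficiently large committee, or to establish that the adversary cannot bias the VRF election; I would state the precise threshold explicitly and carry the matching tail bound through, flagging this mismatch as the genuine crux of the argument.
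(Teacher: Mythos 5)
Your reduction stalls exactly where you flag it: you reduce a double spend to a generic PBFT safety violation, which requires only $1/3$ of the committee to be Byzantine, and you then (correctly) observe that an honest majority $T/N<1/2$ cannot make that event negligible, so you end up proposing to strengthen the hypothesis to $T/N<1/3-\varepsilon$. The idea you are missing --- and the one the paper's proof turns on --- is that reversing an already-committed block is strictly harder than violating abstract BFT safety. The victim releases goods only after block $B$ is committed, and honest verifiers never sign two blocks at the same height (they stay locked on $B$). Hence the attacker's conflicting block can collect \emph{no} honest votes at all, and since a commit needs more than $2/3$ of the group's signatures, the attacker must control more than $2/3$ of the verifier group, not $1/3$. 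With that threshold the stated hypothesis suffices: the paper models the committee composition as a binomial $X$ with parameters $(T/N, M)$ and applies the multiplicative Chernoff bound to get $P_f = P\bigl(X \ge \tfrac{2M}{3}\bigr) \le e^{-\left(\frac{2N}{3T}-1\right)^2\frac{MT}{N}/3}$, which is exponentially small in $M$ whenever $T/N$ is bounded away from $2/3$ --- in particular under an honest majority. Your hypergeometric-versus-binomial refinement is immaterial next to this; the mismatch you call ``the genuine crux'' is resolved by the never-sign-twice argument, not by changing the hypothesis.

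A second divergence: you declare the attacker's hash power irrelevant because there is no longest-chain rule. The paper keeps it in the analysis: the replacement block must still carry a valid proof of work and suppress the old verified block, so the paper multiplies $P_f$ by a Gambler's-Ruin catch-up probability $P_s$ (Nakamoto's random-walk analysis with hash fractions $T/N$ and $(N-T)/N$, with the secret-chain length Poisson-distributed), obtaining $P_{att} = P_f \cdot P_s$ with a case split over the ranges of $T/N$. Dropping $P_s$ only loosens the bound, so that simplification is defensible as an upper estimate; but it is precisely this factor that delivers the paper's quantitative conclusions --- that even $50\%$ of hash power does not guarantee a successful double spend, that roughly $2/3$ is the true threshold, and that below $50\%$ the protocol's attack probability is strictly smaller than that of Bitcoin's original PoW. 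The genuine gap remains the first point: without the locking/never-sign-twice observation raising the required corruption level to $2/3$, your argument cannot establish the theorem under the hypothesis as stated.
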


\begin{proof}
We first observe how an attacker can deploy a double spending attack \cite{DBS}. First, the attacker must corrupt and control more than $2/3$ of the verifier group from Lemma \ref{lemma1}. This lies in the fact that honest verifiers will never sign a block with the same height twice, thus the attacker must ensure more than 2/3 of verifiers will sign whatever block the attacker chooses. Then in the second step, the attacker must gain enough hash computing power to suppress an old verified block. If both of these tasks can be achieved, then the attacker can send their transaction and get what he requires from the receiver. Once this has occurred, the attacker can then attempt to generate another block with the same height as the block that must be suppressed, and then ensure that it can be verified by the verifier group. If the new block suppresses the old block and is successfully committed to the P2P network, a double spending attack will have been successfully conducted. We can now analyze this attack's probability of success in our protocol. We denote this probability as $P_{att}$.

First, we calculate the probability that more than 2/3 of verifiers are from an attacking party. Suppose the total number of verifiers is $M$. Let $X_i$ be the random variable equal to 1 if the $i$-th verifier is from an attacking party and 0 otherwise. Since the verifier is randomly chosen from the total parties, it is easy to see that $X_i$ is a Bernoulli random variable. We have that $P({X_i} = 1) = \frac{T}{N}$ and $P({X_i} = 0) = \frac{N-T}{N}$. Let $X = \sum\limits_{i = 1}^M {{X_i}}$. We can see that X follows a binomial distribution with parameters $(\frac{T}{N},M)$. Using multiplicative Chernoff bound, we have $P(X \ge (1{\rm{ + }}\delta )\frac{{MT}}{N}) \le {e^{-\frac{{{\delta ^2}\frac{{MT}}{N}}}{3}}}$ for any $0 \le \delta  \le 1$ and $P(X \ge (1{\rm{ + }}\delta )\frac{{MT}}{N}) \le {e^{-\frac{{{\delta}\frac{{MT}}{N}}}{3}}}$ for any $\delta  \ge 1$. We denote the probability $P(X \ge [\frac{2M}{3}])$ as $P_f$. This is the probability that the PBFT verification step become insecure. Set $\delta =\frac{2N}{3T}-1$, we have that ${P_f} \le {e^{ - \frac{{{{\left( {\frac{{2N}}{{3T}} - 1} \right)}^2}\frac{{MT}}{N}}}{3}}}$ for $\frac{1}{3}N \le T \le \frac{2}{3}N$ and ${P_f} \le {e^{ - \frac{{\left( {\frac{{2N}}{{3T}} - 1} \right)\frac{{MT}}{N}}}{3}}}$ for $T < \frac{1}{3}N$.

Now we assume that the attacker already controls more than 1/3 of verifiers in the verifier group. What the attacker must do next is to suppress the verified block. We denote the probability of the attacker's success in this step as $P_s$. The race between the honest majority and attacker can be characterized as a Binomial Random Walk. From \cite{BTC1}, we know that the probability of an attacker catching up is analogous to a Gambler's Ruin problem. Thus, we have:
\begin{enumerate}
\item Probability an honest node finds the block is $\frac{N-T}{N}$
\item Probability the attacker finds the block is $\frac{T}{N}$
\item Probability the attacker can catch up from $z$ blocks behind is\\$\left\{ {\begin{array}{*{20}{c}}
{1,T \ge \frac{N}{2}}\\
{{{(\frac{T}{{N - T}})}^z},T < \frac{N}{2}}
\end{array}} \right\}$
\end{enumerate}

Now suppose the attacker begins mining a secret block immediately after his transaction is committed to a block B, and the attacker get what he wants $z$ blocks after block B. Then, the length of the secret chain of the attacker will be a Poisson distribution with expected value $\lambda \frac{T}{{N - T}}$. For simplicity, we assume the period of the verification is longer than $z$, meaning that we do not need to account for a change to the verifier group. Otherwise, the attacker must corrupt the verifier group in a successive period, which increases the difficulty of the attack.

Now suppose the length of the secret chain is $k$ blocks. We will deal with it in two cases. In the first case, given that $\frac{T}{N} < \frac{1}{2}$, if $k \le z$, the number of blocks that the attacker must catch up with is $z-k$, thus the probability that he can catch up is given by ${{(\frac{T}{{N - T}})}^z}$. Otherwise, the probability that the attacker can catch up is $1$ since his secret chain is already longer. Accumulating the probability for each k, we obtain:

\begin{equation}
\label{EQTT}
\begin{aligned}
p_s= & \sum\limits_{k = 0}^z { \frac{{{\lambda ^k}{e^{{\rm{ - }}\lambda }}}}{{k!}} \cdot {{\left( {\frac{T}{{N - T}}} \right)}^{(z - k)}}}  + \sum\limits_{k = z}^\infty  {\frac{{{\lambda ^k}{e^{{\rm{ - }}\lambda }}}}{{k!}}} \\
{\rm{ = }} &1{\rm{ - }}\sum\limits_{k = 0}^z {\frac{{{\lambda ^k}{e^{{\rm{ - }}\lambda }}}}{{k!}} \cdot \left( {1{\rm{ - }}{{\left( {\frac{T}{{N - T}}} \right)}^{(z - k)}}} \right)},
\end{aligned}
\end{equation}
In the second case, when $\frac{1}{2} \le \frac{T}{N} < \frac{2}{3}$, the attacker has more than $50\%$ of hash computing power. Similarly we find $P_s=1$.

Now we can obtain the probability $P_{att}$ as follows:

\begin{equation}
\label{EQTH}
\begin{aligned}
{{\rm{P}}_{att}}
& {\rm{ =  }}{{\rm{P}}_f} \cdot {{\rm{P}}_s}\\
& \le \left\{ {\begin{array}{*{20}{c}}
{{e^{ - \frac{{{{\left( {\frac{{2N}}{{3T}} - 1} \right)}^2}\frac{{MT}}{N}}}{3}}},\frac{1}{2} \le \frac{T}{N} \le \frac{2}{3}}\\
{\left( {1{\rm{ - }}\sum\limits_{k = 0}^z {\frac{{{\lambda ^k}{e^{{\rm{ - }}\lambda }}}}{{k!}} \cdot \left( {1{\rm{ - }}{{\left( {\frac{T}{{N - T}}} \right)}^{(z - k)}}} \right)} } \right) \cdot {e^{ - \frac{{{{\left( {\frac{{2N}}{{3T}} - 1} \right)}^2}\frac{{MT}}{N}}}{3}}},\frac{1}{3} \le \frac{T}{N} < \frac{1}{2}}\\
{\left( {1{\rm{ - }}\sum\limits_{k = 0}^z {\frac{{{\lambda ^k}{e^{{\rm{ - }}\lambda }}}}{{k!}} \cdot \left( {1{\rm{ - }}{{\left( {\frac{T}{{N - T}}} \right)}^{(z - k)}}} \right)} } \right) \cdot {e^{ - \frac{{\left( {\frac{{2N}}{{3T}} - 1} \right)\frac{{MT}}{N}}}{3}}},\frac{T}{N} \le \frac{1}{3}}
\end{array}} \right.
\end{aligned}
\end{equation}

From Inequality \ref{EQTH}, we can observe that even if the attacker controls more than $50\%$ of hash computing power, the success of a double spending attack is still not guaranteed. If the attacker wishes to successfully deploy the double spending attack with 100\% probability of success, he must have at least 2/3 of total hash power. Furthermore, in the case that the attacker controls less than $50\%$ of hash computing power, the success probability of such an attack in our protocol is still lower than that of the original PoW algorithm given in \cite{BTC1}, which proves that our protocol is more secure.

\end{proof}

\section{Experiment}
In this section, we conducted experiments to demonstrate the performance and security of our consensus protocol. We implement our protocol using the Go language.

First, we tested the efficiency of Map-reduce PoW by creating two groups to mine for the proof of work given the same hash puzzle by having them work on the same temporary block. In Group 1 (without sharding), we used 7 sharding servers with each server relying on a single miner. Each miner solved the puzzle for the entire searching space, which is similar to the original PoW scheme. In Group 2 (with sharding), we used 1 sharding server with 7 miners working for it simultaneously. For each group, we recorded the time taken to generate a valid proof of work using the same hash difficulty. We randomized the level of difficulty for each trial and ran the experiments on 1000 hash puzzles.

In Fig. \ref{1a}, the times taken to generate the proof of work for all $1000$ hash puzzles are shown in the scatter plot. As expected, we can see that the group with sharding was much faster in comparison to the group without sharding, with only a single trial exceeding $30$ seconds.

In Fig. \ref{2a}, we present a box plot of consensus times in both groups. The distribution of time from the group with sharding has a much smaller spread compared to the group without sharding. The interquartile range (IQR) of consensus times for the group with sharding ($1$ to $8$ seconds) was $3.7$ times smaller than the group without sharding ($5$ to $31$ seconds). We observed an average time of $5.37$ seconds per puzzle with sharding compared to an average of $21.62$ seconds without sharding. We then removed the $191$ observations with consensus time $0$ and log-transformed the remaining times to meet the normality assumption. The Welch two sample t-test infers a significantly different mean consensus time between the two groups ($t = -21.8$, $df = 1514.3$, $\textrm{p-value} < 0.01$**).
The results suggest that our Map-reduce PoW protocol ensures that block hash puzzles are successfully divided among miners such that there is very minuscule overlap of work, thus allowing a much larger searching space per second with decreased computation power compared to the original PoW protocol in which many overlaps result in slow and inefficient computation. Additionally in our protocol, while holding difficulty constant, as the number of miners solving a hash puzzle increases, the more distributed the work becomes, and thus the time to calculate each puzzle decreases.
\begin{figure}[htbp]
\centering\includegraphics[width=3.5in]{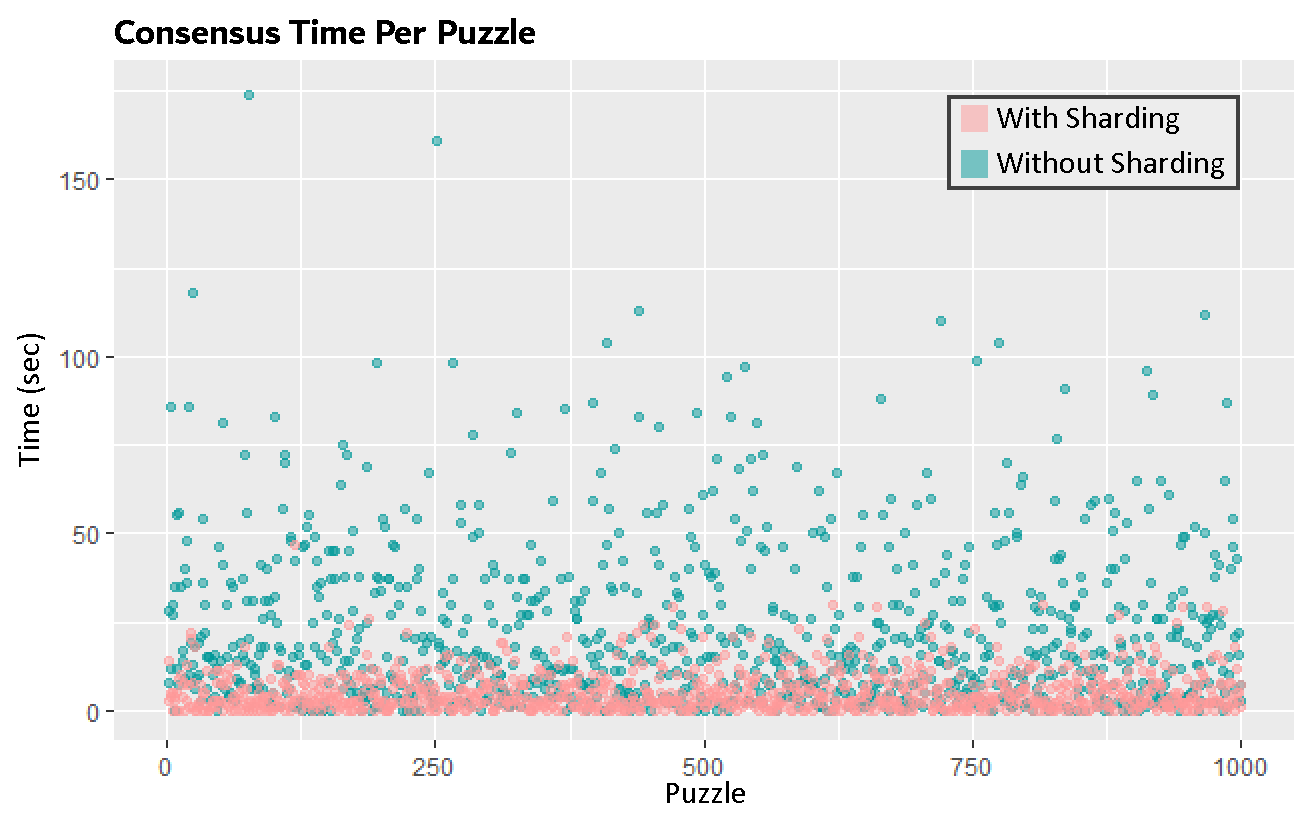}
\caption{Scatter plot comparing times taken to generate the work per puzzle with and without sharding.}\label{1a}
\end{figure}

\begin{figure}[htbp]
\centering\includegraphics[width=3.5in]{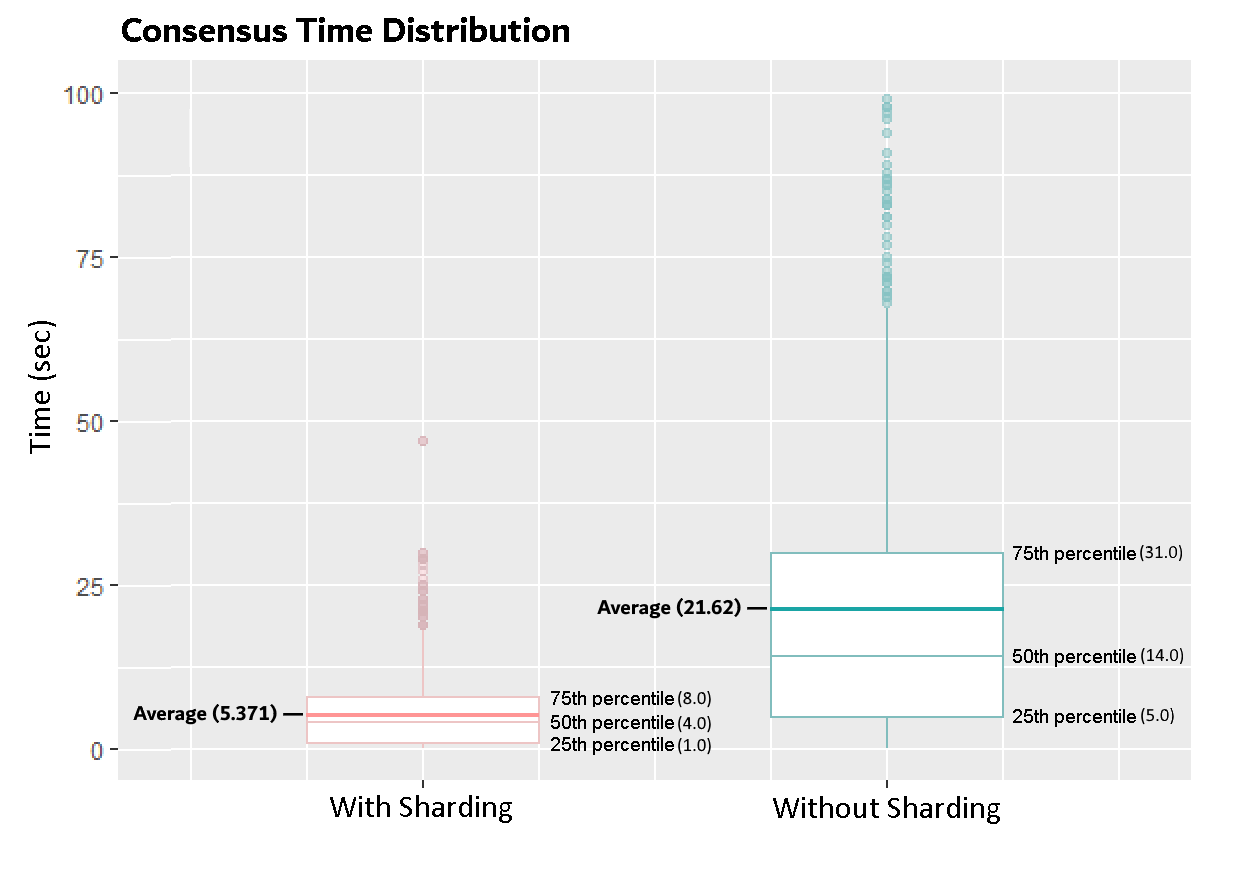}
\caption{Box plot comparing distributions of consensus times with and without sharding.}\label{2a}
\end{figure}

Next, we conducted an experiment to test the PBFT verification protocol. We set up the experiment by creating six groups. A valid block was assigned to Group A1, A2, and A3, and an invalid block was assigned to Group B1, B2, and B3. Each group consisted of $4$ verifiers that were either good (meaning they can always validate a block correctly) or bad (meaning they may validate a block incorrectly). Group A1 and B1 each contained $1$ bad verifier and $3$ good verifiers. Group A2 and B2 each contained $2$ bad verifiers and $2$ good verifiers. Group A3 and B3 each contained $3$ bad verifiers and $1$ good verifier. This setup establishes that Group A1 and B1 have less than $1/3$ bad verifiers, Group A2 and B2 have between $1/3$ and $2/3$ bad verifiers, and Group A3 and B3 have more than $2/3$ bad verifiers.

In the experiment, we assumed bad verifiers could conspire together to alter the validation process. We then ran the experiment and recorded whether the groups marked blocks as valid (V), or invalid (I). This procedure was then run 10 times for each group. The results of this experiment were recorded in Table \ref{tab2}.

\begin{table}[htbp]
\centering
\caption{Block validation results for valid (V) and invalid (I) block trials given different amounts of bad verifiers.}
\label{table}
\begin{tabular}{c|cccccccccc}
  Trial    & 1 & 2 & 3 & 4 & 5 & 6 & 7 & 8 & 9 & 10 \\ \hline
  Group A1 & V & V & V & V & V & V & V & V & V & V \\
  Group A2 & V & I & V & V & I & I & V & V & V & I \\
  Group A3 & I & I & I & I & I & I & I & I & I & I \\
  Group B1 & I & I & I & I & I & I & I & I & I & I \\
  Group B2 & I & I & I & I & I & I & I & I & I & I \\
  Group B3 & V & V & V & V & V & V & V & V & V & V \\
\end{tabular}
\label{tab2}
\end{table}

We can see from the table that in Group A1 and B1, every block was successfully identified as valid and invalid respectively. In Group A2, given only valid blocks, the results were mixed, indicating that the group may not achieve consensus. In Group B2, given only invalid blocks, it successfully identified every block as invalid. In Group A3 and B3, we see that since the bad verifiers could conspire together and can control the verification process, they were able to force a false validation to be accepted every time.

We can confirm from these results that, given a group of verifiers, if there are less than $1/3$ bad verifiers, all blocks that are processed by the verifier group will be verified correctly. If there are between $1/3$ and $2/3$ bad verifiers, invalid blocks submitted to the group will always be correctly identified, however, the group may not be able to achieve consensus on valid blocks. Finally, if there are more than $2/3$ of verifiers conspiring in the PBFT protocol, invalid blocks submitted to the group could be falsely marked valid, and vice versa for valid blocks. From this, we can infer that if a group consists of more than $2/3$ bad verifiers, these verifiers can control the verification process.

\section{Summary and Future Work}

In this paper, we have described a Deterministic Proof-of-Work consensus algorithm, which is a hybrid approach that presents a new Map-reduce PoW mining algorithm and combines it with PBFT verification. We have analyzed the security model of this new consensus protocol and estimated the probability of successful double spending attacks. Our protocol can successfully resist 51\% attacks and also increase this threshold to 66.6\%. We also demonstrated that our protocol can achieve greater security and consistency while keeping high level decentralization through our two experiments.

It is quickly becoming evident that no pure consensus protocols can optimize the DCS triangle efficiently. For example, DPoS is criticized for its decreased decentralization, the consistency of PoS is affected by the Nothing-at-Stake problem, and the PoW protocol suffers from its major scalability problem. Additionally, PBFT alone has network scalability problems that result in it being used primarily for consortium chains. Compared to pure consensus protocols, hybrid consensus protocols, such as ours, prove to be far more capable of optimizing the DCS triangle while also providing forward security and are more practical and efficient as a result.

There is still much work to do on improving our system.

Firstly, we can incorporate a reputation score in the verifier election process, which can in turn raise the difficulty of controlling verifiers in the verifier group, thus increasing security against attackers. This score can be combined with the random seed to scale the probability for a candidate to be elected as a verifier. More work needs to be conducted in regards to this system in the future.

The DPoW protocol can also utilize IBLT (Invertible Bloom Lookup Tables) and a bloom filter \cite{IBLT} when broadcasting blocks within the P2P network. This can improve the performance of the protocol by reducing the network cost significantly when relaying blocks between nodes.


\begin{thebibliography}{00}

\bibitem{DCS} Slepak G. and Petrova A. "The DCS Theorem." (Oct. 4, 2017) doi: arXiv:1801.04335v1 

\bibitem{POW} Jakobsson M. and Juels A. "Proofs of Work and Bread Pudding Protocols." Communications and Multimedia Security. Kluwer Academic Publishers, pp. 258-272, (1999) doi:10.1007/978-0-387-35568-9-18

\bibitem{DPOS} Sankar L. S., Sindhu M. and Sethumadhavan M. "Survey of consensus protocols on blockchain applications." 2017 4th International Conference on Advanced Computing and Communication Systems (ICACCS), pp. 1-5, (Jan 2017) doi:10.1109/ICACCS.2017.8014672

\bibitem{PBFT2} Castro M. and Barbara L., "A Correctness proof for a practical byzantine-fault-tolerant replication
algorithm." Tech. Memo MIT/LCS/TM-590, MIT Laboratory for Computer Science, (1999) https://www.microsoft.com/en-us/research/wp-content/uploads/2017/01/tm590.pdf

\bibitem{HC1} Dwork C. and Moni N. "Pricing via Processing or Combatting Junk Mail." in Advances in Cryptology-CRYPTO '92, 12th Annual International Cryptology Conference, pages 139-147 (1992) doi:10.1007/3-540-48071-4-10

\bibitem{BTC2} Nakamoto S. "Bitcoin open source implementation of p2p currency." P2P Foundation 18. (2009) Available: https://satoshi.nakamotoinstitute.org/posts/p2pfoundation/1

\bibitem{BTC1} Nakamoto S. "Bitcoin: A Peer-to-Peer Electronic Cash System." Bitcoin Project. (2008) Available: http://www.bitcoin.org/bitcoin.pdf.

\bibitem{SM1} Scherer M. "Performance and Scalability of Blockchain Networks and Smart Contracts." Umea University, (2017) https://umu.diva-portal.org/smash/get/diva2:1111497/FULLTEXT01.pdf

\bibitem{PBFT4} Lamport L., Shostak R., and Pease M. "The Byzantine Generals Problem." ACM Transactions on Programming Languages and Systems. pp. 382-401. (1982) doi:10.1145/357172.357176 

\bibitem{MR1} Merkle R. C. "A Digital Signature Based on a Conventional Encryption Function." Advances in Cryptology - CRYPTO '87. Lecture Notes in Computer Science. 293. p. 369. (1988) doi:10.1007/3-540-48184-2-32

\bibitem{KELSEY} John K. and Bruce S. "Second Preimages on n-bit Hash Functions for Much Less than $2^n$ Work." Cryptology ePrint Archive: Report 2004/304 (2004) Available: https://eprint.iacr.org/2004/304.pdf

\bibitem{SHA256} Duong T. and Rizzo J. "Flickr's API Signature Forgery Vulnerability." Netifera. (2009) Available: http://www.netifera.com/research.

\bibitem{SHA3} Paar C. and Jan P. "SHA-3 and The Hash Function Keccak" Understanding Cryptography: A Textbook for Students and Practitioners.  Available: http://www.crypto-textbook.com.

\bibitem{MT}  Merkle R. C. US patent 4309569, "Method of providing digital signatures." The Board Of Trustees Of The Leland Stanford Junior University (Jan 5, 1982) Available: https://patents.google.com/patent/US4309569A/en

\bibitem{ES1} Eyal I. and Sirer E. G. "Majority is not enough: Bitcoin mining is vulnerable." in Financial Cryptography and Data Security (2014) Available: https://www.cs.cornell.edu/~ie53/publications/btcProcFC.pdf

\bibitem{DW1} Decker C. and Wattenhofer R. "Information Propagation in the Bitcoin network." in IEEE International Conference on Peer-to-Peer Computing, Trento, Italy. (2013) doi:10.1109/P2P.2013.6688704

\bibitem{VRF} Silvio M., Michael R. and Salil V. "Verifiable random functions." Proceedings of the 40th IEEE Symposium on Foundations of Computer Science. pp. 120-C130 (1999) Available: https://doi.org/10.1109/SFFCS.1999.814584

\bibitem{PBFT1} Castro M. and Liskov B. "Practical Byzantine Fault Tolerance and Proactive Recovery." ACM Transactions on Computer Systems. Association for Computing Machinery. 20 (4): 398-461. Freely accessible. (2002) doi:10.1145/571637.571640

\bibitem{DBS} Karame G. O. "Two bitcoins at the price of one? double-spending attacks on fast payments in bitcoin." In Proc. of Conference on Computer and Communication Security. (2012) Available: https://eprint.iacr.org/2012/248.pdf

\bibitem{IBLT} Goodrich M. T. and Mitzenmacher M. "Invertible Bloom Lookup Tables." (2015) doi:arXiv:1101.2245v3









\end{thebibliography}
\end{document}